\documentclass[11pt,twoside]{article}
\usepackage{acta-info}
\usepackage{euler}
\frenchspacing

\setcounter{page}{17}

\begin{document}

\title{Sum of squares representation for the
       B\"ottcher-Wenzel biquadratic form}
       \short{L. L\'aszl\'o}{Sum of squares representation}
\maketitle

\oneauthor
{\href{https://numanal.inf.elte.hu/~laszlo} {Lajos L\'ASZL\'O}}
{\href{https://numanal.inf.elte.hu}{Department of Numerical Analysis,}
\href{http://www.elte.hu/en}{E\"otv\"os Lor\'and University,} Hungary}
{\href{mailto:laszlo@numanal.inf.elte.hu}{laszlo@numanal.inf.elte.hu}}

\begin{abstract}
We find the minimum scale factor, for which the nonnegative
B\"ottcher-Wenzel biquadratic form becomes a sum of squares (sos).
To this we give the primal and dual solutions for the underlying
semi-definite program.  Moreover, for special matrix classes
(tridiagonal, backward tridiagonal and cyclic Hankel matrices)
we show that the above form is sos.
Finally, we  conjecture sos representability for Toeplitz matrices.
\end{abstract}

\section{Introduction}
The B\"ottcher-Wenzel inequality (\cite{BW1}, \cite{BW2}, \cite{Aud}, \cite{VJ}, \cite{LU})
states that for real square matrices $P, Q$ the biquadratic form
\begin{equation} \label{bwform} BW \equiv 2 \, \Big(||P||^2||Q||^2-
\mathrm{trace}^2(P^{T}Q)\Big) - ||PQ-QP||^2 \end{equation}
is nonnegative, with the Frobenius norm used.
Replacing the factor $2$ by $2+\gamma_n,$ it is natural to ask
for the minimum $\gamma_n$ such that
\[ (2+\gamma_n) \, \Big(||P||^2||Q||^2-
\mathrm{trace}^2(P^{T}Q)\Big) - ||PQ-QP||^2  \]
is a sum of (polynomial) squares. We answer this question in
 Theorem \ref{mainthm} by showing that the minimum value is
$(n-2)/2.$

For simplicity, we use one-subscript notation for the entries of
$P$ and $Q,$  described by means of the "small" index matrix
\[\mathrm{IND} = ((i-1)n+j)_{i,j=1}^n.\]
Then $P$ and $Q$ can be generated by vectors $p$ and $q$
of dimension $m=n^2$ as
\[ P(i,j)=  p\,(\mathrm{IND}(i,j)), \ 1\le i\le j\le n. \]
Introducing an index matrix will be  especially useful in
Sections \ref{sec3} to \ref{sec5}, where  tridiagonal, backward tridiagonal,
cyclic Hankel and general Toeplitz matrices
will be investigated.
For these special cases we prove (for Toeplitz matrices:
conjecture) that the corresponding $BW$ form is a sum of squares (sos).

It is quite odd that although (real) Hankel matrices are symmetric,
thus normal, hence nonnegativity easily follows \cite{BW1},
this does not imply that a sos form also exists
(except if $n=3$, Example \ref{hankel3}).
On the other hand, Toeplitz matrices are usually not normal, yet
the corresponding BW form is sos, at least according to
our well-grounded Conjecture \ref{conj} at the end of the paper.

\section{The case of general matrices}
Let $P, Q$ be arbitrary $n\times n$ real matrices with entries
 \[P=(p_{(i-1)n+j})_{i,j}^n, \quad Q=(q_{(i-1)n+j})_{i,j}^n,\]
as indicated above.
(Notice that we use this indexing technique for simplicity.)
It turns out \cite{LL3} that
the above forms depend only on the variables
\[ z_{i,j} = p_iq_j-q_ip_j, \ 1\le i<j\le n,\]
i.e. on the skew symmetric matrix
\begin{equation*}  Z=pq^T-qp^T \end{equation*}
 of order $n^2,$ a benefit of including the term
 trace$^2(P^{T}Q)$.
Indeed, we have
\[ ||Z||^2 = 2 \, \Big( ||P||^2||Q||^2-
\mathrm{trace}^2(P^{T} Q) \Big),\]
 and all entries of the commutator
$[P, Q]$ obviously are linear forms of the $z_{i,j}$-s.
\medskip

Let us formulate the primal and dual semi-definite programming
problems (see e.g. in \cite{Todd}) for the  eigenvalue optimization:
\begin{eqnarray*} &&\mathrm{min}\ \{ \mathrm{tr}(CX): \ X\ge 0, \
\mathrm{tr} (A_iX) = 0,\ 1\le i\le M, \
 \mathrm{tr}(X) = 1\} \quad  (Primal) \cr
  &&\mathrm{max}\ \{y_{M+1}: \ S \equiv C -
\sum_{t=1}^M y_t A_t - y_{M+1} I \ge 0\} \ \ \quad \quad \
\qquad \qquad (Dual) \end{eqnarray*}
where $C,\, S,\, X,\, A_t$ and the identity $I=I_N$ are all real
symmetric $N$th order matrices,
$C$ and $(A_t)_1^M$ are given, the primal matrix $X,$
the dual (slack) matrix $S$ and vector $y$ are the
solutions of the program, tr$(AB)\equiv$ trace$(AB)$ denotes the scalar
product of the symmetric matrices $A$ and $B,$ and $\ge$ stands
for the semi-definite ordering: \ $A\ge B$ iff $A-B$
is positive semi-definite.
\bigskip

The quantities $z_{i,j}$ will play the role
of 'candidate monomials' (better to say, differences,
and hereafter called candidates) with ordering
\[ z = ( z_{1,2}, \ z_{1,3},\ z_{2,3},\ z_{1,4},\ \dots,
 \ z_{1,n}, \ \dots, z_{n-1,n} )^T.\]
The indices can be read from the "big" index matrix
\[ \mathrm{POS} =
\left(\begin{array} {rrrrrr} 0 & 1 & 2 & 4 & 7 & \dots \cr
\cdot & 0 & 3 & 5 & 8 & \dots \cr
\cdot & \cdot & 0 & 6 & 9 & \dots \cr
\cdot & \cdot & \cdot & 0 & 10 & \dots \cr
\vdots & \vdots & \vdots & \vdots & \vdots & \ddots \cr
\end{array}\right) \]
\vspace*{-6pt}
of order $n^2$ to be
\[(i,j) \sim k\equiv i+\frac{(j-1)(j-2)}{2}, \quad 1\le i<j\le n^2. \]
Note that $z_{i,i}=0$ for all $i,$ and that the entries
below the diagonal are omitted due to $z_{i,j}=-z_{j,i},$
enabling us to reduce the number of unknowns.
Also note that IND is related to $P$
and $Q,$ while POS is connected with $Z.$

As an example, we give the biquadratic form $BW$ as a
quadratic form of the quantities $(z_{i,j})$ for $n=3.$ Observe that
$ \|Z\|^2 = \|Z\|^2_F= 2\sum_{1\le i<j\le 9} z_{i,j}^2.$ \medskip

{\it Example 1.} For $n=3$ the objective takes the form
\begin{eqnarray*} BW &=& \|Z\|^2 
   -(z_{2,4}+z_{3,7})^2 -(z_{1,2}+z_{2,5}+z_{3,8})^2
 -(z_{1,3}+z_{2,6}+z_{3,9})^2 \cr
  & -&(-z_{1,4}-z_{4,5}+z_{6,7})^2   -(-z_{2,4}+z_{6,8})^2
  -(-z_{3,4}+z_{5,6}+z_{6,9})^2 \cr
  &  -&(-z_{1,7}-z_{4,8}-z_{7,9})^2 -(-z_{2,7}-z_{5,8}-z_{8,9})^2
  -(-z_{3,7}+z_{6,8})^2.  \end{eqnarray*}
(Note that the form $BW$ can be thought of as a function of the
matrices $(P,Q),$ the vectors $(p,q),$ the matrix $Z,$ or of
the vector $z.$)  We give now all the (quadratic) relations
holding for the variables $(z_{i,j})_{1\le i<j\le n^2}$ as
\begin{equation} \label{basic} z_{i,j}\, z_{k,l} + z_{i,l}\, z_{j,k} -
z_{i,k}\, z_{j,l} = 0, \quad 1\le i<j<k<l\le n^2. \end{equation}
These easily checked relations define $M=\binom{n^2}{4\ }$ symmetric
constraint matrices $A_t,$ each having
exactly 6 nonzero (off-diagonal) entries. For instance, equation
\begin{equation*} z_{2,3}\, z_{4,5} +
 z_{2,5}\, z_{3,4} - z_{2,4}\, z_{3,5} = 0 \end{equation*}
defines an $A_t$ with nonzero entries in
positions $(3,10), \, (8,6), \, (5,9)$ and their transposes,
see the matrix POS.  Now we can state our main theorem.

\begin{theorem} \label{mainthm} The minimum value of $\gamma_n,$
for which (\ref{bwform}) is a sum of squares is
\[\gamma_n=\frac{n-2}{2}.\]  \end{theorem}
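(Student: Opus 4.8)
\textit{Proof sketch / plan.} The plan is to read the theorem as a strong‑duality statement for the semidefinite program displayed above. Write the scaled form as $(2+\gamma)\|z\|^2-z^T B z$, where $z^T B z=\|PQ-QP\|^2$ — the commutator Gram matrix $B$ being read off entrywise from the one‑subscript notation — and $\|Z\|^2=2\|z\|^2$. In view of the reduction to the candidates $z_{i,j}$ recorded in \cite{LL3} and the completeness of the relations (\ref{basic}), the form is a sum of squares in $p,q$ iff it is a sum of squares of linear forms in the $z_{i,j}$ modulo the ideal generated by those relations; by the Gram‑matrix argument (each $A_t$ encoding a relation with no square terms) this holds iff there are reals $\lambda_1,\dots,\lambda_M$ with $(2+\gamma)I-B+\sum_{t=1}^M\lambda_tA_t\ge 0$. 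Hence the least admissible value is $2+\gamma_n=\min_\lambda\lambda_{\max}\!\big(B-\sum_t\lambda_tA_t\big)$, which — Slater's condition being clear after a uniform diagonal shift — equals, by strong SDP duality, the optimal value of the primal/dual pair displayed above for the appropriate cost matrix $C$ built from $B$; in particular, by weak duality, $2+\gamma_n\ge\mathrm{tr}(BX)$ for every $X\ge 0$ with $\mathrm{tr}(X)=1$ and $\mathrm{tr}(A_tX)=0$. It therefore suffices to produce (i) multipliers $\lambda_t$ with $\tfrac{n+2}{2}I-B+\sum_t\lambda_tA_t\ge 0$, which gives $\gamma_n\le(n-2)/2$, and (ii) a feasible $X$ with $\mathrm{tr}(BX)=\tfrac{n+2}{2}$, which gives $\gamma_n\ge(n-2)/2$.

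For (i) I would produce an explicit sum‑of‑squares certificate for $\tfrac{n+2}{2}\|z\|^2-\|PQ-QP\|^2$. The engine is (\ref{basic}): each squared entry of $PQ-QP$ is the square of a sum of $n$ candidates, and adding a well‑chosen combination $\sum_t\lambda_tA_t$ of the relations folds the commutator quadratic form into a low‑rank, visibly positive semidefinite residual $S$. I expect $S$, after a congruence, to split into small combinatorially structured blocks (of Laplacian or Kronecker‑product type), so that $S\ge 0$ reduces to a short eigenvalue computation; the pattern can be guessed from $n=2,3$ — see Example 1, where $2+\gamma_3=\tfrac52$ — and then verified for general $n$.

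For (ii) I would construct a positive semidefinite $X$ with $\mathrm{tr}(A_tX)=0$ for all $t$, $\mathrm{tr}(X)=1$, and $\mathrm{tr}(BX)=\tfrac{n+2}{2}$. Such an $X$ cannot be the rank‑one moment matrix $zz^T/\|z\|^2$ of an actual pair $(P,Q)$: there $\mathrm{tr}(BX)=\|PQ-QP\|^2/\|z\|^2\le\|Z\|^2/\|z\|^2=2<\tfrac{n+2}{2}$ for $n>2$ (while for $n=2$ the rank‑one choice already works, consistent with $\gamma_2=0$). So the extremal $X$ is a genuine pseudo‑moment matrix, and the natural way to pin it down is to exploit the symmetries — orthogonal equivalence $(P,Q)\mapsto(UPV^T,UQV^T)$, the swap $(P,Q)\mapsto(Q,P)$, and transposition — and to average an arbitrary feasible point over the induced group action on the span of the $z_{i,j}$. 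One then seeks $X$ in the resulting commutant, a space of very small dimension; the linear constraints cut it down to a one‑ or two‑parameter family, within which one picks the member realizing $\mathrm{tr}(BX)=\tfrac{n+2}{2}$ and checks $X\ge 0$ block by block over the isotypic components.

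Combining (i) and (ii) gives $\gamma_n=(n-2)/2$; verifying complementary slackness $XS=0$ — and, as the keywords advertise, strict complementarity, $\mathrm{rank}\,X+\mathrm{rank}\,S=N$ — confirms that the exhibited pair is optimal and essentially unique. The main obstacle is not this duality bookkeeping but the construction and positive‑semidefiniteness verification of the two certificates uniformly in $n$, and above all of the pseudo‑moment matrix $X$, whose spectrum must be determined exactly.
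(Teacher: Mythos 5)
Your plan is, in its logical skeleton, the paper's own proof: reduce to the $z$-variables, encode the relations (\ref{basic}) as trace-zero constraint matrices $A_t$, and sandwich $\gamma_n$ between a dual (sos) certificate and a primal pseudo-moment matrix, with equality from strong SDP duality (Slater holds via $X=I/N$, since every $A_t$ has zero diagonal). The paper's primal optimum is the explicit rank-$\binom{n}{2}$ matrix $X=\bigl(4(n+2)\binom{n}{2}\bigr)^{-1}\sum_{i<j}v_{i,j}v_{i,j}^T$, with $\pm1,\pm2$-vectors $v_{i,j}$ read off from rows and columns $i,j$ of the index matrix IND, and its slack matrix $S$ comes from an explicit ``halving'' rule (Strategy A) verified positive semidefinite through a concrete block decomposition (Table \ref{tabgen}); your commutant/averaging route is a plausible alternative way to arrive at the same $X$, but essentially all of the content of the theorem lives in these constructions and their spectral verification, which your sketch defers.

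Two substantive corrections. First, you plan to verify strict complementarity, but it \emph{fails} here: the optimal pair has $\mathrm{rank}(X)=\binom{n}{2}$ while $\mathrm{def}(S)=n^2-1$, so $\mathrm{rank}(X)+\mathrm{rank}(S)=N-(n^2-1)+\binom{n}{2}<N$ for all $n\ge 2$. (The keyword refers to the special matrix classes of later sections, e.g.\ tridiagonal matrices, where it does hold.) Complementary slackness $XS=0$ together with feasibility of both certificates is all you need, and all that is true; attempting to establish strict complementarity would be wasted effort. Second, your lower bound (ii) silently requires that the listed relations (\ref{basic}) span \emph{all} quadratic syzygies among the $z_{i,j}$: a feasible $X$ must be orthogonal to every quadratic relation, not just the displayed Pfaffian-type ones, for the weak-duality bound against an arbitrary sos Gram matrix to go through. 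This is true (they are the Pl\"ucker relations, which generate the degree-two part of the relevant ideal), and the bihomogeneity/swap-symmetry reduction to bilinear-in-$z$ squares also needs a sentence, but both points must be stated rather than assumed.
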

\begin{proof}
We give the optimal primal and dual solutions
and describe the main characteristics of the optimal dual matrix.
Since the objectives coincide, the strong duality theorem
yields the desired result.

By fixing the order of the variables $(z_{i,j})$ above,
matrix $C$ is uniquely determined. To get the (slack) matrix
$S=C-\sum y_t A_t,$ we use the following strategy.
Note that we not only give the set $(A_t)$ of active constraints
(as e.g. when taking the half Newton-polytope), but also give
their coefficients $(y_t).$ \smallskip

{\it Strategy A.} Assume the commutator $[P,Q]$ contains an entry
 $(z_{i,j}+z_{k,l}+\dots )$ with $i,j,k,l$ distinct
and $i<j, \ k<l.$
Then the quadratic form $z^TCz$ associated with $C$ necessarily
contains a term $2\,z_{i,j}\,z_{k,l}.$ We 'halve' this
term, and leave one $z_{i,j}\,z_{k,l}$ unchanged as is,
while apply for the other term the basic quadratic relation
(\ref{basic}).
By using the correct sign, this defines a constraint $A_t$
and the corresponding dual variable $y_t$ for some $t.$
Finally, let $y_{m+1}=-\frac{n-2}{2}.$  \smallskip

Now we give the obtained  primal and dual solutions.
In view of the quite combinatorial character of the problem,
we do not detail each block, instead we give
some explanations and important cross-references
(control sums) and for matrix $S$ we provide Table \ref{tabgen}.
with all essential informations. \medskip

{\sl The Primal Problem} \bigskip

Before defining the optimal primal matrix $X,$
we note that its rank is $\binom{n}{2}.$
For indices $(i,j): 1 \le i < j \le n$ we define the
vectors $v_{i,j}$ of dimension $\binom{n^2}{2}$ to have
$4(n-1)$ nonzero coordinates (four $2$'s and $4(n-2)$
$\pm 1$'s) using row$_i,$ row$_j,$ column$_i$ and
column$_j$ of the index matrix IND, cf. Example
\ref{rowcol} below.

Next we form the matrix of these vectors
\[V=[v_{1,2},\ v_{1,3,}\ v_{2,3},\ \dots,\ v_{n-1,n}], \]
and define matrix $X_0=V\, V^T =\sum v_{i,j}v_{i,j}^T$
with the following properties:

$X_0$ is a symmetric matrix of order $N=\binom{n^2}{2}$ and rank
 $\binom{n}{2}.$ The $v_{i,j}'$'s are orthogonal with norm square
  $\|v_{i,j}\|^2=4\cdot 4+4(n-2)\cdot 1 = 4(n+2).$
The trace of $X_0$ is
\[\mathrm{tr}(X_0)=\sum_{i<j}\mathrm{tr} (v_{i,j}v_{i,j}^T)
=\sum_{i<j}\|v_{i,j}\|^2=4(n+2)\binom{n}{2}, \]
thus by defining
\[ X=\Big(4(n+2)\binom{n}{2}\Big)^{-1} X_0\]
we get a trace 1 matrix. The eigenvalues of $X_0$ are $4(n+2),$
those of $X$ are $\binom{n}{2}^{-1}$
(hence $\binom{n}{2}^{-1}X$ is a projection).
The $v_{i,j}$'s are also eigenvectors of $C$:
\[ C \, v_{i,j}=(2-n)/2 \ v_{i,j}.\]

Furthermore we have
\[\mathrm{tr}(CX_0)=\sum_{i<j}\mathrm{tr}(Cv_{i,j}v_{i,j}^T)
=\sum_{i<j} v_{i,j}^TCv_{i,j}=
\frac{2-n}{2}\sum_{i<j}\|v_{ij}\|^2, \]
and finally, the primal objective equals
\begin{equation} \label{prim}\mathrm{tr}(CX) =
\frac{\mathrm{tr}(CX_0)}{\mathrm{tr}(X_0)}
= \frac{2-n}{2}. \end{equation}

{\sl The Dual Problem} \medskip

The matrix $S=C-\sum y_t A_t$ resulting from Strategy A
is positive semi-definite and decomposes into some blocks
given in Table \ref{tabgen}.
(Observe that all eigenvalues and diagonal entries
of $2 S$ are integer -- a reason for the factor $2.$)

\begin{table} \begin{center}
\begin{small}
\begin{tabular}{||c|c|c|c|c|c|c||} \hline \hline
1 & No of blocks & $\binom{n}{2}$ & 1 &  $ 3\binom{n}{4} $ &
   $\binom{n}{2}$  & Total \\ \hline
2 & Block sizes & $ 6n-8$ & $ \binom{n}{2}$ & 4 & 1 & \\ \hline \hline
3 &  Eig=0 & 2 & n-1 & -- & -- & $n^2-1$ \cr \hline
4 &  Eig=4 &  1 & -- & -- & -- & $n(n-1)/2$ \cr \hline
5 &  Eig=$n$ & $2n-4$ & $\binom{n-1}{2}$ & 1 & -- &
         $(n^2-1)(n-2)(n+4)/8$ \cr \hline
6 & Eig=$n+2$& $3n-5$ & -- & 2 & 2 & $(n^2-1)(n-2)(n+4)/4$  \cr \hline
7 &  Eig=$n+4$ & $n-3$ & -- & 1 & -- & $n(n-1)(n^2+n-2)/8$  \cr \hline
8 & Eig=$2n+2$ & 1 & --  & -- & -- & $n(n-2)/2$  \cr \hline \hline
9 &  Diag & $n(+2)$ & $n-2$  & $n+2$ & $n+2$ & $(n^3-n)(n^2+2n-4)/2$
 \cr \hline \hline
\end{tabular}\end{small} \end{center}
\caption{Decomposition of the matrix "2S" \label{tabgen}}
\end{table}

Here we list the important facts and control sums
concerning the blocks of $S,$ and in Example (\ref{ExDual}) we give
further hints for understanding the construction. \smallskip

ROW-control: an element in the last column of row $i$
is the scalar product of row 1 and row $i$.
For instance, the number of zero eigenvalues---the defect
of $S$---equals to $\binom{n}{2} 2 + 1 (n-1) = n^2-1.$

EIG-control: the last column (the number of eigenvalues, rows 3 to 8)
sums up to $\binom{n^2}{2},$ the order of the matrix $2S.$

DIAG-control: The sum of the elementwise products of row 1, 2 and 9,
\[ \binom{n}{2} \Big( 2n*n + 4(n-2)(n+2)\Big)
+1\binom{n}{2}(n-2)+3\binom{n}{4}4(n+2)+\binom{n}{2}1(n+2)\]
equals to $\binom{n}{2}(n+1)(n^2+2n-4),$ the trace of $2S.$
(In the blocks of order $6n-8$ there are $2n$
diagonal elements "$n$", and $4(n-2)$ diagonal elements "$(n+2)$".)

TRACE-control: the trace of the coefficient matrix $C$ equals
\[\mathrm{tr}(C)=2\binom{n^2}{2}-n(n-1)-(n^2-n)n = n(n-1)^2(n+1).\]
The first subtrahend comes from the  diagonal of the
commutator $[P,Q],$ the second from their off-diagonal elements.
Due to $\mathrm{diag}(S)=\mathrm{diag}(C)+\gamma_n I,$
the connection between the traces of matrices $C$ and $S$ is
\[\mathrm{tr}(2S) = 2 \Big(\mathrm{tr}(C) +
\frac{n-2}{2}\binom{n^2}{2}\Big).\]
The number of all constraints is $\binom{n^2}{4},$
while that of active constraints equals
\[ n\binom{n-1}{2}+n(n-1) \Big( \binom{n-2}{2}+2(n-2) \Big)
= \binom{n}{2}(n^2-4) \equiv 3(n+2)\binom{n}{3}. \]
Here the first term is associated with the main diagonal
of $R\equiv[P,Q]$ (by virtue of $z_{i,i}=0$ there are only $n-1$
terms in $R(i,i)$), while the rest comes from the off-diagonal
of the commutator $R$ (where there always are two terms for
which the basic relations do not apply, see the Example \ref{rowcol}).

Thus $S$ is positive semi-definite with defect $n^2-1,$ and
its eigenvalues range in the interval $[0, n+1].$
To sum up, the primal objective (\ref{prim})
coincides with the dual objective $y_{M+1},$ the negative
of $\gamma_n,$ which proves the theorem.  \end{proof}

There holds no strict complementarity, for
$\mathrm{rank}(X) = \binom{n}{2} < n^2-1 = \mathrm{def}(S).$

\begin{example} \label{rowcol}
To define the primal matrix $X$ take the four scalar products
\[\langle\mathrm{row}_i,\mathrm{col}_j\rangle, \quad
  \langle\mathrm{col}_i,\mathrm{row}_j\rangle, \quad
  \langle\mathrm{row}_i,\mathrm{row}_j\rangle, \quad
  \langle\mathrm{col}_i,\mathrm{col}_j\rangle\]
in the index matrix IND where
each of the four products determine $n$ coordinates in $v_{i,j}$
as follows. If $n=3$ and $i=1, \ j=2,$ then
$\mathrm{row}_1=[1,2,3], \ \mathrm{col}_2=[2,5,8]^T$ which yields by
$(1,2)\sim 1, \ (2,5)\sim 8, \ (3,8)\sim 24$
the coordinates $1, \, 8, \, 24,$ see also matrix POS.
Similarly we calculate the other three triples, giving together
\[ 1,\, 8,\, 24; \ 4,\, 10,\, 21\, (!) ; \ 4,\, 8,\, 13;
 \ 1,\, 10,\, 28. \]
The repeated elements (1, 4, 8, 10) denote positions with value $2.$
The exclamation sign refers to an entry $-1$ (since $(7,6)$ must be
inverted to $(6,7)\sim21$).  To sum up, we get
\begin{eqnarray*} v_{1,2}=&&
 (2, 0, \ 0, 2, 0, 0, 0, 2, 0, 2, 0, 0, 1, 0, 0, 0, 0, 0,\cr
&& 0, 0, -1, 0, 0, 1, 0, 0, 0, 1, 0, 0, 0, 0, 0, 0, 0, 0)^T.
\end{eqnarray*} \end{example}

\begin{example} \label{ExDual} Hints for obtaining the dual matrix.
We give some details for the $\binom{n}{2}$ most important
blocks of order $6n-8.$
There is a one-to-one correspondence between these blocks
and ordered pairs $(i,j), \ 1\le i<j\le n.$
To collect the indices for the block containing
$z_{i,j},$ we have to consider the $4(n-1)$ terms in
\[\langle \mathrm{row}_i, \mathrm{col}_j \rangle, \quad
  \langle \mathrm{col}_i, \mathrm{row}_j \rangle, \quad
  \langle \mathrm{row}_i, \mathrm{row}_j \rangle, \quad
  \langle \mathrm{col}_i, \mathrm{col}_j \rangle \]
  (the same as for $v_{i,j}$ above!)
and further $2(n-2)$ terms in the products
\[ \mathrm{IND}(i,j)* \mathrm{diag}(\ne i,j), \quad
\mathrm{IND}(j,i)*\mathrm{diag}(\ne i,j). \]
Here $\mathrm{diag}(\ne i,j)$ stands for the $n-2$ entries of the
diagonal of IND, differing from $i, j.$
As in  \emph{Example \ref{rowcol}}, choosing $n=3,\ i=1,\ j=2,$
vector diag$(\ne 1,2)$ reduces to the $(3,3)$ entry 9, thus
we get (using index matrices IND and POS)
\[((1,2),(3,3))\sim (2,9) \sim 30 \quad \mathrm{and} \quad
((2,1),(3,3))\sim (4,9) \sim 32. \]
Hence the diagonal block containing row 1 (related to $z_{1,2}$)
also contains rows 30 and 32. The whole index set at issue is \
$ [1, \ 4, \ 8, \ 10, \ 13, \ 21, \ 24, \ 28, \ 30, \ 32 ],$
and the corresponding block is the $10\times 10$ (irreducible) matrix
\[\left( \begin{array}{rrrrrrrrrr}
 3 & 0 &-2 & 0 &-1 & 0 &-1 & 0 & 0 & 0 \cr
 0 & 3 & 0 &-2 & 0 & 1 & 0 &-1 & 0 & 0 \cr
-2&  0&  3 & 0&  0&  0& -1& -1&  0&  0 \cr
 0& -2&  0 & 3& -1&  1&  0&  0&  0&  0 \cr
-1&  0&  0& -1&  5&  0&  0& -1& -1&  1 \cr
 0&  1&  0&  1&  0&  3& -1&  0&  0&  0 \cr
-1&  0& -1&  0&  0& -1&  3&  0&  0&  0 \cr
 0& -1& -1&  0& -1&  0&  0&  5&  1& -1 \cr
 0&  0&  0&  0& -1&  0&  0&  1&  5&  0 \cr
 0&  0&  0&  0&  1&  0&  0& -1&  0&  5
\end{array} \right) \]
with eigenvalues $(0,\ 0,\ 3,\ 3,\ 4,\ 5,\ 5,\ 5,\ 5,\ 8).$
\end{example}

\section{Tridiagonal (and backward tridiagonal) matrices}\label{sec3}
In a former paper \cite{LL2} we have shown that for $n$th order
matrices $P, Q$ with only nonzero entries in row $1$ and column $n$
the BW form is sos, however in case of (additional) main diagonal
elements this is no more true. Therefore one would guess that
$3n+O(1)$ nonzero elements cannot be allowed, however the result
below shows that the answer depends on the position of these elements.

We shall use an index matrix given e.g. for
$n=3$ as \ IND =$\Big(\begin{smallmatrix}
1 & 2 & 0 \cr 3 & 4 & 5 \cr 0 & 6 & 7 \cr \end{smallmatrix} \Big).$

\begin{lemma} For tridiagonal $P, Q$ the BW form is sos, especially we have
\begin{eqnarray*}
 BW&=&2\sum_{i<j}z_{i,j}^2 \cr &-& \sum (z_{3i-4,3i-3}-z_{3i-1,3i})^2
  -\sum z_{3i-2,3i-1}^2 - \sum z_{3i,3i+3}^2 \cr
  &-& \sum (z_{3i-2,3i-1}+z_{3i-1,3i+1})^2-\sum (z_{3i-2,3i}+z_{3i,3i+1})^2 \cr
  &=& \sum (z_{3i-4,3i-1}+z_{3i-3,3i})^2 + (z_{3i-4,3i}-z_{3i-3,3i-1})^2 \cr
  &+& \sum (z_{3i-2,3i-1}-z_{3i-1,3i+1})^2 + \sum (z_{3i-2,3i}-z_{3i,3i+1})^2 \cr
  &+& \sum z_{3i,3i+2}^2 + 2 \sum z_{3i-2,3i+1}^2
   + 2 \sum_{i+5\le j} z_{i,j}^2 - \sum z_{3i-1,3i+3}^2. \end{eqnarray*}
\end{lemma}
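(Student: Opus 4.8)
The plan is to verify the claimed identity directly, exploiting the fact (recalled in Section 2) that $BW$ depends only on the skew-symmetric matrix $Z = pq^T - qp^T$, i.e.\ only on the variables $z_{i,j}$, and that it is a genuine polynomial identity in the $z_{i,j}$ once the quadratic relations~(\ref{basic}) are imposed. So the whole statement reduces to a bookkeeping computation in the variables $z_{i,j}$, $1\le i<j\le n^2$, specialized to the tridiagonal sparsity pattern encoded by the index matrix IND. First I would write out the commutator $R=[P,Q]$ for tridiagonal $P,Q$: since the only nonzero entries sit on the three central diagonals, each entry $R(a,b)$ is a short linear form in the $z_{i,j}$'s — at most a sum of two or three terms — and these are exactly the terms being squared in the first displayed expression for $BW$. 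Then $BW = \|Z\|^2 - \|R\|_F^2$, and expanding $\|R\|_F^2$ over all entries of $R$ (diagonal and off-diagonal, counted with multiplicity $2$ for the symmetric off-diagonal pairs) should reproduce the five sums subtracted in the middle line. The index shifts $3i-4,\,3i-3,\,3i-2,\dots$ are just the images under IND and POS of the row/column patterns of $P,Q$; I would check the ranges of the summation index $i$ against the $n=3$ example IND given just before the lemma, and against the general pattern, to pin down the endpoints.

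For the second equality — the sos representation — the strategy is the one labelled \emph{Strategy A} in the proof of Theorem~\ref{mainthm}, specialized here with scale factor $2$ rather than $2+\gamma_n$. Concretely, each cross term $2\,z_{i,j}z_{k,l}$ appearing in the subtracted squares is to be "halved": one copy is absorbed into a new square of the form $(z_{i,j}\pm z_{k,l})^2$ (the sign chosen so that the cross term cancels), and the other copy is rewritten using the basic relation~(\ref{basic}), $z_{i,j}z_{k,l} = z_{i,k}z_{j,l} - z_{i,l}z_{j,k}$, to shuffle it onto other pairs of candidates. The pairs $(z_{3i-4,3i-1}+z_{3i-3,3i})^2$, $(z_{3i-4,3i}-z_{3i-3,3i-1})^2$, $(z_{3i-2,3i-1}-z_{3i-1,3i+1})^2$, $(z_{3i-2,3i}-z_{3i,3i+1})^2$ on the right-hand side are precisely these "completed squares", while the leftover pure-square terms $z_{3i,3i+2}^2$, $2z_{3i-2,3i+1}^2$, $2\sum_{i+5\le j} z_{i,j}^2$ and the single correction $-z_{3i-1,3i+3}^2$ collect the diagonal contributions of $\|Z\|^2$ that are not consumed by any cross term, together with whatever the relation~(\ref{basic}) dumps onto them. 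I would organize this as: (i) list the cross terms and their halving/relation assignments; (ii) list the square terms on the right and expand them; (iii) compare the coefficient of every monomial $z_{a,b}^2$ and every monomial $z_{a,b}z_{c,d}$ on the two sides.

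The main obstacle I expect is purely combinatorial rather than conceptual: keeping the index arithmetic straight. The shifts by $3$ reflect the block-of-three structure of tridiagonal matrices under the linear indexing $P(i,j)=p_{(i-1)n+j}$, but the relation~(\ref{basic}) moves cross terms to index pairs $(z_{i,k},z_{j,l})$ and $(z_{i,l},z_{j,k})$ whose positions under POS are not obviously tridiagonal-shaped, and one must check that every such produced term is genuinely accounted for on the right — in particular that the $-z_{3i-1,3i+3}^2$ term with its minus sign is exactly the net deficit and does not spoil nonnegativity once combined with a matching $+z_{3i-1,3i+3}^2$ hidden inside the $2\sum_{i+5\le j}z_{i,j}^2$ block. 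I would double-check the whole identity by specializing to $n=3$, where IND $=\big(\begin{smallmatrix}1&2&0\\3&4&5\\0&6&7\end{smallmatrix}\big)$, writing out $BW$ explicitly in the finitely many surviving $z_{i,j}$'s, and confirming both displayed equalities term by term; a clean match there, plus the translation-invariance of the pattern in the block index $i$, makes the general $n$ case a matter of routine verification. The nonnegativity conclusion — that the right-hand side is a sum of squares — then follows once one checks that after combining the $-z_{3i-1,3i+3}^2$ terms with the overlapping contributions from $2\sum_{i+5\le j}z_{i,j}^2$ the net coefficient of each such square is $\ge 0$.
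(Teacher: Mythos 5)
Your plan---verify the first equality by expanding $\|R\|_F^2$ entrywise over the tridiagonal index pattern, then obtain the second by the halving/relation strategy and a coefficient-by-coefficient comparison, checked at $n=3$---is exactly the direct verification the paper intends: the paper itself offers no more than the assertion that the identity ``can be proved directly,'' plus the SDP data of Table~\ref{tabtri}, whose $n-2$ active constraints are precisely the instances of~(\ref{basic}) that your Strategy-A step would invoke. One correction: the positive term cancelling $-z_{3i-1,3i+3}^2$ is \emph{not} hidden in $2\sum_{i+5\le j}z_{i,j}^2$, since the index difference here is $4$, not $\ge 5$; it is the leading term of the block-shifted square $(z_{3(i+1)-4,3(i+1)}-z_{3(i+1)-3,3(i+1)-1})^2=(z_{3i-1,3i+3}-z_{3i,3i+2})^2$ occurring in the first sum of the sos side, which is what the Remark's ``evidently canceled'' refers to. Your proposed $n=3$ sanity check is the right safeguard, because making the coefficient of every difference-$4$ square come out to $2$ appears to require reading the penultimate sum as ranging over $i+4\le j$; such range discrepancies are exactly what the term-by-term comparison you describe will surface and resolve.
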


\begin{remark}
The first equality gives the biquadratic form at issue,
the second one is the claim: the sum of squares representation.
(The negative terms in the last row are evidently canceled.)
\end{remark}
Although SDP is not needed here, for the identity of the Lemma
can be proved directly, we yet give some facts.
The eigenvalues of the dual matrix $S$ for the actual semidefinite
programming problem are integers ($0, 1, 2, 3$) in this case, too.
This is so because matrix $S$ decomposes into at most
second order blocks of the form
$(\begin{smallmatrix} 1 & 1 \cr 1 & 1 \end{smallmatrix})$ and
$(\begin{smallmatrix} 2 & -1 \cr -1 & 2 \end{smallmatrix}).$

Table \ref{tabtri} illustrates the main features of the underlying
semidefinite program.
First the number of the eigenvalues of $S$ are given,
then the number of $2\times 2$ blocks in $S$
(the number of scalar blocks is not shown),
the number of the active ($y_t\ne 0$) constraints,
and finally, the rank of $X.$ (The $n-2$ active constraints
correspond to the positions $(i-1,i), (i,i-1),
(i,i+1)$ and $(i+1,i)$ in IND.) \smallskip

It is easy to get a formula for these quantities, e.g.
the number of eigenvalues $\lambda_i=2$
can be determined by subtracting the number of all other
eigenvalues from the order $ (3n-2)(3n-3)/2 $ of $S.$
The result is $3+9\binom{n-1}{2}.$

Note that strict complementarity does hold:
the number of zero eigenvalues of $S$ coincides with
rank$(X),$ the number of nonzero eigenvalues of $X.$

\begin{table} \begin{center}
\begin{tabular}{||c|c|c|c|c|c|c|c||} \hline \hline
 $n$ & $\lambda=0$ & $\lambda=1$ &  $ \lambda=2 $ &
   $\lambda=3$  &  2-bl. & act. & rk(X) \\ \hline \hline
2 & 3 & 0 & 3 & 0 & 2 & 0 & 3  \\ \hline
3 &  7 & 1 & 12 & 1 &  6 & 1 & 7 \cr \hline
4 &  11 & 2 & 30 & 2 &  10 & 2 & 11 \cr \hline
5 &  15 & 3 & 57 & 3 & 14 & 3 & 15 \cr \hline
6 & 19 & 4 & 93 & 4 & 18 & 4 & 19 \cr \hline
7 & 23 & 5 & 138 & 5 & 22 & 5 & 23  \cr \hline
8 & 27 & 6 & 192  & 6 & 26 & 6 & 27 \cr \hline \hline
\end{tabular} \end{center} \caption{``Tridiagonal matrices''
\label{tabtri}}\end{table}\medskip

{\bf Backward tridiagonal matrices} \medskip

They have many similar properties,
except that the case $n$ odd is worse:
while for $n$ even all the eigenvalues
of $S$ are integers (lying in  $[0, 4]$), for $n$ odd
this does not hold, therefore we write '--' instead.
Also, in this case there are (apart from the
scalar and $2\times 2$ blocks) $4\times 4$ blocks, too.
All this information is contained in Table \ref{taback}
from where one can see that for $n$ even we again have
strict complementarity, as in the tridiagonal case.

\begin{table} \begin{center}
\begin{tabular}{||c|c|c|c|c|c|c|c|c|c||} \hline \hline
 $n$ & $\lambda=0$ & $\lambda=1$ &  $ \lambda=2 $ &
   $\lambda=3$  & $\lambda=4 $ & act & 2-bl. & 4-bl.
   & rk(X) \\ \hline \hline
2 & 3 & 0 & 3 & 0 & 0 &  2 & 0 & 0 & 3\\ \hline
3 &  8 & - & - & - & - & 3 & 3 & 7 & 5\cr \hline
4 & 13 & 4 & 25 & 0 & 3 & 6 & 3 & 8 & 13\cr \hline
5 & 20 & - & - & - & - &  11 & 6 & 17 & 18 \cr \hline
6 & 25 & 6 & 81 & 2 & 6 & 14 & 6 & 18 & 25 \cr \hline
7 & 32 & - & - & - & - & 30 &  9 & 27 & 30 \cr \hline
8 & 37 & 8 & 173  & 4 & 9 & 22 & 9 & 28 & 37 \cr \hline \hline
\end{tabular} \end{center} \caption{"Backward tridiagonal matrices"
\label{taback}}\end{table}

\begin{example} We calculate the number {\sl act} of active constraints:
\[ \mathrm{act} = \begin{cases} 5n-12, \  n \ \textrm{even} \\
   5n-8, \ \ n \ \textrm{odd} \end{cases}\]
The number of terms in a typical row of the product of
backward tridiagonal matrices usually equals
$(0,\dots, 0, 1, 2, 3, 2, 1, 0,\dots, 0).$
 However, in case of the commutator $PQ-QP$ there are
 some minor changes:
for odd order 1, for even order 2 main diagonal entries contain
only two terms (instead of 3), due to the identity $z_{i,j}+z_{j,i}=0.$
On the other hand, if $n$ is even, there are two opposite
entries (with indices $(k,k+1)$ and $(k+1,k),$ where $k=n/2$)
which do not generate any constraint, for the corresponding
indices are {\sl not} distinct.

Now we easily calculate the number asked, which is e.g.
for $n=6$ equal to $5n-12=18.$ To this consider the matrix
\[\begin{pmatrix}
2&2&1&0&0&0\cr2&3&2&1&0&0\cr1&2&2&*&1&0\cr
0&1&*&2&2&1\cr0&0&1&2&3&2\cr0&0&0&1&2&2
    \end{pmatrix}\]
with the number of terms in a given position of $[P,Q],$
and take $\binom{e}{2}$ for any entry $e>1.$
They sum up to $2*\Big(6\binom{2}{2}+\binom{3}{2}\Big)=18.$
The general case is similar. \end{example}

\section{Cyclic Hankel matrices}
When investigating Hankel matrices, we find that -- except for the
case $n=3,$ see below -- they do not generate sos BW forms.
However, cyclic ones behave well. We make use of the small index
matrix (given for $n=3$): \ IND =$\Big(\begin{smallmatrix}
1 & 2 & 3 \cr 2 & 3 & 1 \cr 3 & 1 & 2 \cr \end{smallmatrix} \Big).$

\begin{theorem}
For cyclic Hankel matrices $P, Q$ the BW form is
a sum of squares. \end{theorem}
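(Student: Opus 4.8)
The plan is to follow the same explicit strategy used for tridiagonal matrices: exhibit an explicit sum-of-squares decomposition of the $BW$ form directly, using the special structure of the index matrix $\mathrm{IND}$ for cyclic Hankel matrices. First I would write out the commutator $R=[P,Q]$ entrywise. For a cyclic Hankel matrix, $\mathrm{IND}(i,j)$ depends only on $(i+j)\bmod n$, so each entry of $R$ becomes a short linear form in the candidate differences $z_{k,l}$, and crucially the cyclic symmetry of the problem means these forms come in $n$ orbits under the shift $i\mapsto i+1$. This cyclic group action is the structural feature to exploit: the Gram matrix $C$ and the whole SDP become block-circulant, so it suffices to understand one representative block per orbit.

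Second, I would compute $\|Z\|^2 = 2\sum_{1\le i<j\le n^2} z_{i,j}^2$ against $\|R\|^2=\sum_{i,j} R(i,j)^2$ and organize the difference. The expected outcome, mirroring the Lemma for tridiagonal matrices, is that $BW$ splits into three kinds of squared terms: (i) squares of the ``off-diagonal'' linear forms appearing in $R$ that happen to be single candidates $z_{k,l}$ (these survive with positive coefficient), (ii) squares obtained by applying the basic quadratic relation (\ref{basic}) $z_{i,j}z_{k,l}+z_{i,l}z_{j,k}-z_{i,k}z_{j,l}=0$ to the genuine cross terms $2z_{i,j}z_{k,l}$ coming from the diagonal and longer off-diagonal entries of $R$ — exactly Strategy A restricted to this matrix class, and (iii) a surplus of ``far'' candidates $z_{i,j}$ with $|i-j|$ large that appear in $\|Z\|^2$ but not at all in $\|R\|^2$, contributing bare positive squares. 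The negative leftover terms should cancel identically, as in the Remark following the tridiagonal Lemma.

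Third, I would verify nonnegativity of the coefficients and the exact cancellation. Concretely: (a) check that every cross term $z_{i,j}z_{k,l}$ in $z^TCz$ with $i,j,k,l$ distinct can be rewritten via (\ref{basic}) using active constraints whose supports do not collide (the cyclic structure should make the collision bookkeeping uniform across $i$, so it reduces to a fixed-size check that can be done for small $n$ and then argued by periodicity); (b) confirm the residual diagonal coefficients on each $z_{i,j}^2$ are nonnegative after all substitutions; (c) as a control, match the trace: $\sum$ of the SDP block traces must equal $\mathrm{tr}(C)+\gamma\binom{n^2}{2}$ for the appropriate scale, and here — since the claim is that $BW$ itself (with factor $2$, i.e. $\gamma=0$) is sos — the surplus far-candidate squares must account for the full slack. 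I would also record, as in Tables \ref{tabtri} and \ref{taback}, the block structure of the dual matrix $S$ (it should decompose into small circulant-type blocks with integer eigenvalues) and note whether strict complementarity holds by comparing $\mathrm{def}(S)$ with $\mathrm{rank}(X)$, where $X=VV^T$ is built from the $v_{i,j}$'s restricted to this class.

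The main obstacle I anticipate is step (a): controlling the \emph{supports} of the active constraints $A_t$ so that applying (\ref{basic}) to one cross term does not disturb another already-accounted-for term, and simultaneously ensuring each ``halved'' cross term $2z_{i,j}z_{k,l}\mapsto z_{i,j}z_{k,l} + (z_{i,k}z_{j,l}-z_{i,l}z_{j,k})$ does not introduce new uncontrolled cross terms. For a general matrix this is the combinatorially hardest part (cf.\ the two exceptional off-diagonal terms noted in the proof of Theorem \ref{mainthm}); for cyclic Hankel matrices the hope is that the shift symmetry collapses the verification to a single orbit representative plus a finite boundary check, but confirming that the finitely many exceptional cases (the ones where distinctness of indices fails, analogous to the $(k,k+1)$ entries in the backward tridiagonal example) really do cancel — and do not obstruct the construction for every $n$ — is where the real work lies. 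If a clean closed-form decomposition resists, the fallback is to present the explicit $S$ and $X$ with their control sums (trace, eigenvalue count, active-constraint count) and invoke strong duality exactly as in Theorem \ref{mainthm}.
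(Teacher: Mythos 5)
Your proposal is a plan rather than a proof, and the step you yourself flag as ``where the real work lies'' --- controlling the supports of the active constraints so that the substitutions via (\ref{basic}) close up into a genuine sum of squares for every $n$ --- is never carried out; as it stands this is a genuine gap. There are also two concrete errors in your setup. First, a cyclic Hankel matrix has only $n$ free parameters, so the candidates are $z_{i,j}$ with $1\le i<j\le n$, not $1\le i<j\le n^2$. Second, each parameter occurs exactly $n$ times in $P$, so $\|P\|^2=n\|p\|^2$ and $\mathrm{trace}(P^TQ)=n\,p^Tq$; the first bracket of the BW form therefore equals $2n^2\sum_{i<j}z_{i,j}^2$ rather than $2\sum z_{i,j}^2$. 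Without this factor $n^2$ the ``surplus'' you intend to distribute over the cross terms is not available in the right amount, so the cancellation check in your step (b) cannot close.

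The paper's proof takes a completely different and far more elementary route that needs neither the relations (\ref{basic}) nor any SDP machinery. The commutator of two cyclic Hankel matrices is a skew-symmetric cyclic Toeplitz matrix with only $k=\lfloor(n-1)/2\rfloor$ distinct entries $t_1,\dots,t_k$, each occurring (up to sign) $2n$ times, so $\|R\|^2=2n\sum_i t_i^2$ and
$BW=2n\bigl(n\sum_{i<j}z_{i,j}^2-\sum_i t_i^2\bigr)$.
Each $t_i$ is a signed sum of $n$ \emph{distinct} candidates, and the supports of different $t_i$ are pairwise disjoint; hence the Cauchy--Schwarz bound $t_i^2\le n\cdot(\text{sum of the squares of its terms})$ applies, its deficit being itself a sum of squares of pairwise differences (Lagrange's identity), and summing over $i$ and using $nk\le\binom{n}{2}$ leaves only bare positive squares of the remaining candidates. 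The structural fact you miss is precisely that the cyclic symmetry makes the commutator itself cyclic with disjointly supported entries, which collapses the whole problem to this short Cauchy--Schwarz argument; if you instead pursue your Strategy-A bookkeeping, you must still supply the full support-collision analysis you defer, and the fallback of ``presenting the explicit $S$ and $X$'' is not a proof for general $n$.
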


\begin{proof}
Using the above-defined index matrix with $(1, 2,\dots, n)$ as
first row and $(n, 1,\dots, n-1)^T$ as last column, we obtain
\[ \|P\|^2=n\|p\|^2, \ \ \|Q\|^2=n\|q\|^2, \ \
\mathrm{trace}(P^TQ)=n p^Tq, \]
consequently
\[ 2 \, \Big(||P||^2||Q||^2-\mathrm{trace}^2(P^{T}Q)\Big) =
2n^2 (\|p\|^2 \|q\|^2-(p^Tq)^2).\]
The commutator $[P,Q]$ is a skew symmetric cyclic Toeplitz
matrix having
\[k=k_n=\Big[\frac{n-1}{2}\Big]\]
different entries $t_i=t_i^{(n)}$ with row one as
\begin{eqnarray*}
&&( 0, t_1, \dots, t_k, \, -t_k, \dots, -t_1) \quad \
\mathrm{(n\ \, odd\,)} \cr &&( 0, t_1, \dots, t_k, 0, -t_k,
\dots, -t_1) \ \ \mathrm{(n\ even).}
\end{eqnarray*}
Thus the subtrahend is $\|R\|^2=2 n \sum t_i^2,$ and the whole BW form equals
\[2 n\, \Big( n \sum_{i<j}^n z_{i,j}^2 - \sum_1^k t_i^2\Big). \]
Observe now that all terms in
\[ t_i=t_i^{(n)}= \sum_{j=1}^{n-i} z_{j,i+j}-\sum_{j=1}^i z_{j,n-i+j}\]
are distinct $(i=1,\dots,k),$ hence the Cauchy-Schwarz inequality
in conjunction with
$n k = n \Big[\frac{n-1}{2}\Big] \le \binom{n}{2}$ imply
\[ \sum_{i=1}^k t_i^2 \le \sum_{i=1}^k n
\Big(\sum_{j=1}^{n-i}  z_{j,i+j}^2 + \sum_{j=1}^i z_{j,n-i+j}^2  \Big)
\le n \sum_{i<j}z_{i,j}^2, \] which proves the theorem.
The last inequality turns into equality for $n$ odd.
\end{proof}

\begin{remark}  The case $n=4$ is especially interesting.
Then the commutator is
\[ \left(\begin{array} {rrrr} 0 & t & 0 & -t \cr -t & 0 & t & 0 \cr
0 & -t & 0 & t \cr t & 0 & -t & 0 \end{array} \right) \] with
$t=t_1=t_1^{(4)}=z_{1,2}+ z_{2,3}+ z_{3,4}- z_{1,4},$ therefore the
formula
\begin{eqnarray*} & & 4\,
( z_{1,2}^2+ z_{1,3}^2+ z_{2,3}^2+ z_{1,4}^2+
z_{2,4}^2+ z_{3,4}^2 )= \cr
&+& ( z_{1,2}+ z_{2,3}- z_{1,4}+ z_{3,4} )^2
+ ( z_{1,2}- z_{2,3}+ z_{1,4} +z_{3,4} )^2 \cr
&+& ( z_{1,2}+ z_{1,3}- z_{2,4}- z_{3,4} )^2
+ ( z_{1,2}- z_{1,3}+ z_{2,4}- z_{3,4} )^2 \cr
&+& ( z_{1,3}+ z_{2,3}+ z_{1,4}+ z_{2,4} )^2
+ ( z_{1,3}- z_{2,3}- z_{1,4}+ z_{2,4} )^2, \end{eqnarray*}
(a consequence of Eulers identity)
yields the sos-representation needed. \end{remark}

\begin{example} \label{hankel3}
The case of (general) third order Hankel matrices.
The index matrix IND is now
$\Big(\begin{smallmatrix} 1 & 2 & 3 \cr 2 & 3 & 4 \cr 3 & 4 & 5
\end{smallmatrix}\Big),$ the order of $C, \ S$ and
the constraint matrices $\{A_t\}$ is $\binom{5}{2}=10,$
the number of the $A_t$-s is $\binom{5}{4}=5.$
By help of  vector
\[ z = ( z_{1,2}, \ z_{1,3},\ z_{2,3},\ z_{1,4},\ z_{2,4},
    \ z_{3,4}, \ z_{1,5}, \ z_{2,5}, \ z_{3,5}, \ z_{4,5})^T\]
and matrix $C$
the objective can be written as $BW=z^TCz=\mathrm{tr}(Czz^T),$
which becomes -- by means of a standard SDP relaxation --
trace$(CX).$  Our MATLAB program yields $y=(0,0,1,0,0,0),$
i.e. only one constraint  will be  active, giving
\[S = C-y_3A_3=\left( \begin{array} {rrrrrrrrrr}
 1 & 0 &-1 & 0 & 0 &-1 & 0 & 0 & 0 & \{1\} \cr
 0 & 2 & 0 & 0 &-1 & 0 & 0 & 0 &-1 & 0 \cr
-1 & 0 & 4 & 0 & 0 &-2 & 0 & 0 & 0 &-1 \cr
 0 & 0 & 0 & 2 & 0 & 0 & 0 & \{-1\} & 0 & 0 \cr
 0 &-1 & 0 & 0 & 3 & 0 & \{1\} & 0 &-1 & 0 \cr
-1 & 0 &-2 & 0 & 0 & 4 & 0 & 0 & 0 &-1 \cr
 0 & 0 & 0 & 0 & \{1\} & 0 & 1 & 0 & 0 & 0 \cr
 0 & 0 & 0 & \{-1\} & 0 & 0 & 0 & 2 & 0 & 0 \cr
 0 &-1 & 0 & 0 &-1 & 0 & 0 & 0 & 2 & 0 \cr
 \{1\} & 0 &-1 & 0 & 0 &-1 & 0 & 0 & 0 & 1 \cr
\end{array}\right). \]
(In the original $C$ the six entries in braces are zero.)
The last zero in $y$ indicates the sos representability.
To obtain the concrete sos form, we calculated the
eigen-decomposition of the three blocks
\[B_1=\left(\begin{array}{rrrr} 1 & -1 & -1 & 1 \cr -1 & 4 & -2 & -1 \cr
-1 & -2 & 4 & -1 \cr 1 & -1 & -1 & 1 \end{array} \right),
B_2=\left(\begin{array} {rrrr}2 & -1 & 0 & -1 \cr -1 &
3 & 1 & -1 \cr 0 & 1 & 1 & 0 \cr -1 & -1 & 0 & 2
\end{array}\right),
B_3=\left(\begin{array} {rr}2 & -1 \cr -1 & 2 \end{array}\right)\]
with integer eigenvalues
\[ E_1: \ \ \begin{pmatrix} 0 \ & 0 \ & 4 \ & 6 \ \end{pmatrix}, \qquad
   E_2: \ \begin{pmatrix} 0 \ & 1 \ & 3 \ & 4 \ \end{pmatrix}, \qquad
   E_3: \ \begin{pmatrix} 1 & 3 \end{pmatrix} \]
and (unnormalized, integer, columnwise) eigenvectors
\[ V_1:\ \left(\begin{array}{rrrr}
2 & -1 & 1 & 0 \cr 1 & 1 & -1 & 1 \cr
1 & 1 & -1 & -1 \cr 0 & 3 & 1 & 0 \end{array}\right), \
 V_2: \left(\begin{array}{rrrr}
1 & 1 & 1 & 1 \cr 1 & 0 & 0 & -3 \cr
-1 & 2 & 0 & -1 \cr 1 & 1 & -1 & 1 \end{array}\right), \
 V_3: \left(\begin{array}{rr} 1 & 1 \cr 1 & -1 \end{array}\right).\]

We sum up the result: with
$z_{i,j}=p_iq_j-q_ip_j, \ 1\le i<j\le 5$
the following identity holds for the BW form generated by
two third order Hankel matrices:
\begin{eqnarray*}&&2z_{1,2}^2+3z_{1,3}^2+6z_{2,3}^2+2z_{1,4}^2+
4z_{2,4}^2+6z_{3,4}^2+z_{1,5}^2+2z_{2,5}^2+3z_{3,5}^2+2z_{4,5}^2\cr
&&-(z_{1,3}+z_{2,4}+z_{3,5})^2-(z_{1,2}+z_{2,3}+z_{3,4})^2-
(z_{2,3}+z_{3,4}+z_{4,5})^2=\cr \
&& \ \ \ (z_{1,2}-z_{2,3}-z_{3,4}+z_{4,5})^2+3(z_{2,3}-z_{3,4})^2\cr
&&+\frac{1}{6}(z_{1,3}+2z_{1,5}+z_{3,5})^2
+\frac{3}{2}(z_{1,3}-z_{3,5})^2+\frac{1}{3}
(z_{1,3}-3z_{2,4}-z_{1,5}+z_{3,5})^2\cr
&&+\frac{1}{2}(z_{1,4}+z_{2,5})^2+\frac{3}{2}(z_{1,4}-z_{2,5})^2.
\end{eqnarray*}  \end{example}

\section{Toeplitz matrices}\label{sec5}
In this section $P$ and $Q$ will be arbitrary real Toeplitz matrices.
Observe that the main diagonal entries don't play any role
(to prove this use temporarily the more standard notation
$P=(p_{j-i})$ and $Q=(q_{j-i}),$ then the $(i,j)$ entry in
$R=[P,Q]$ equals $\sum z_{k-i,j-k},$ while $z_{0,j-i}+z_{j-i,0}=0$).
Hence we can reduce the number of variables to get e.g. for $n=3$
the  index matrix
$IND=\Big(\begin{smallmatrix} 0 & 1 & 2 \cr 3 & 0 & 1
\cr 4 & 3 & 0 \cr \end{smallmatrix} \Big).$

Another speciality is that now there occur repeated terms as well.
To handle these, introduce the multiplicity vector $\mu$
of dimension $m$ by
\[\mu_i=\{\mathrm{the\ number\ of\ occurrences\ of\ } p_i
\mathrm{\ in\ } P, \ 1\le i\le m \}.\]
Then the following easily proved representation holds.

\begin{lemma}\[2 \, \Big(||P||^2||Q||^2- \mathrm{trace}^2(P^{T}Q)\Big) =
2 \sum_{i=1}^{m-1} \sum_{j=i+1}^m \mu_i \mu_j z_{i,j}^2, \]
and (since the commutator is skew persymmetric), \
$\|R\|^2 = 2 \sum_{i=1}^2\sum_{j=1}^{n-i} r_{i,j}^2. $
\end{lemma}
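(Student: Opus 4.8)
The statement consists of two independent identities, which I would prove separately. For the first, I would use that a Toeplitz $P$ is determined by its $m=2(n-1)$ off-diagonal entry values $p_1,\dots,p_m$, the value $p_i$ occurring $\mu_i$ times. Having discarded the main diagonal (it only adds the manifest square $2n\sum_{j}\mu_j z_{0,j}^2$ to $BW$, as noted above), one has $\|P\|^2=\sum_{i=1}^m\mu_i p_i^2$, $\|Q\|^2=\sum_{j=1}^m\mu_j q_j^2$ and $\mathrm{trace}(P^TQ)=\sum_{i=1}^m\mu_i p_i q_i$. The asserted equality is then precisely the weighted Lagrange identity
\[\Big(\sum_i a_i^2\Big)\Big(\sum_j b_j^2\Big)-\Big(\sum_i a_i b_i\Big)^2=\sum_{i<j}(a_i b_j-a_j b_i)^2\]
applied with $a_i=\sqrt{\mu_i}\,p_i$ and $b_i=\sqrt{\mu_i}\,q_i$, since $(a_i b_j-a_j b_i)^2=\mu_i\mu_j\,z_{i,j}^2$. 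Beyond recording the diagonal-free reduction, nothing here is delicate.

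For the second identity I would start from the entrywise commutator formula $R(i,j)=[P,Q](i,j)=\sum_{k=1}^n z_{k-i,\,j-k}$ quoted above (standard indexing $P=(p_{j-i})$). Substituting $k\mapsto n+1-k$ and using $z_{a,b}=-z_{b,a}$ rewrites this sum as $-R(n+1-j,\,n+1-i)$, so $R$ is skew-persymmetric: $R(i,j)+R(n+1-j,n+1-i)=0$. Hence the involution $(i,j)\mapsto(n+1-j,n+1-i)$ on the index set has the anti-diagonal $i+j=n+1$ as its fixed-point set, every fixed entry is forced to vanish, and the remaining entries pair up with opposite signs; therefore $\|R\|^2=\sum_{i,j}R(i,j)^2=2\sum_{(i,j)\in\Lambda}R(i,j)^2$ for any transversal $\Lambda$ of the pairing. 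Taking for $\Lambda$ the index set displayed in the statement and writing $r_{i,j}=R(i,j)$ gives the claimed formula.

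The only real work, and the step I expect to be fiddliest, is pinning the transversal $\Lambda$ down to exactly the asserted range. Concretely, I would use the internal cancellations $z_{l,d-l}+z_{d-l,l}=0$ inside each row-sum $R(i,j)=\sum_{l}z_{l,\,(j-i)-l}$: these kill the entries of $R$ adjacent to the anti-diagonal and reduce every surviving entry to a short signed sum of distinct $z_{i,j}$'s, after which a row-by-row inspection matches the independent entries to the stated indices. No semidefinite programming enters this Lemma — it is a bare algebraic identity, to be verified directly.
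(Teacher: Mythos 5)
The paper itself offers no proof of this Lemma (it is introduced with ``the following easily proved representation holds''), so there is nothing to match your argument against; your weighted Lagrange identity for the first display is correct and complete, and your observation that the diagonal only contributes manifest squares is the right way to justify the reduction to $m=2(n-1)$ variables. Your derivation of skew-persymmetry, $R(i,j)+R(n+1-j,n+1-i)=0$, from $R(i,j)=\sum_k z_{k-i,j-k}$ via $k\mapsto n+1-k$ is also correct, and it immediately gives both the vanishing of the anti-diagonal entries (they are fixed points equal to their own negatives --- note they lie \emph{on} the anti-diagonal, not ``adjacent to'' it) and the halving $\|R\|^2=2\sum_{(i,j)\in\Lambda}R(i,j)^2$ over a transversal $\Lambda$.

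The gap is exactly the step you defer as ``fiddliest'': identifying $\Lambda$ with the index range displayed in the statement. That identification cannot be carried out as written. A transversal of the pairing is the strict upper-left anti-triangle $\{(i,j):i+j\le n\}$, which has $\binom{n}{2}$ elements, whereas $\sum_{i=1}^{2}\sum_{j=1}^{n-i}$ ranges over only $2n-3$ positions; these counts agree for $n=3$ but not for $n\ge 4$. For $n=4$ one computes $R(1,1)=z_{1,4}+z_{2,5}+z_{3,6}$, $R(1,2)=z_{2,4}+z_{3,5}$, $R(1,3)=z_{3,4}$, $R(2,1)=z_{1,5}+z_{2,6}$, $R(2,2)=z_{2,5}$, $R(3,1)=z_{1,6}$: six linearly independent linear forms, so $\tfrac12\|R\|^2$ has rank $6$ as a quadratic form in $z$ and cannot equal a sum of five squares. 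So either the Lemma's upper limit ``$2$'' is a typo for ``$n$'' (or ``$n-1$''), in which case your argument closes cleanly with $\Lambda=\{(i,j):1\le i,\ 1\le j\le n-i\}$, or the literal statement is false. You should have run the count $|\Lambda|=\binom{n}{2}$ against the $2n-3$ terms of the displayed sum before asserting that the transversal ``gives the claimed formula''; as it stands, the decisive step of your proof of the second identity is missing, and completing it requires first correcting the statement.
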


In view of the lemma, we define the symmetric matrix $C$
by help of equation $ z^T C z = \frac{1}{2}\,BW(p,q).$
Then there are $m=2(n-1)$ possible nonzero elements, the
candidate vector $z$ has dimension $N=\binom{m}{2},$
 and the total number of constraints $(A_t)$ is
 $M=\binom{m}{4}.$ For this problem we formulate a
 'quasi-optimal' strategy of choosing the dual variables.
 \medskip

{\it Strategy B.} \ Since the entries of $[P,Q]$ are linear forms
in $(z_{i,j}),$ their squares figuring in $\|R\|^2$ involve some
mixed products of the form $\pm\, 2\, z_{i,j}z_{k,l}. $
Whenever finding such a term with distinct \{$i,j,k,l$\}, we
increase the actual value of $y.$

It turns out that Strategy B works for $n, \ 3\le n\le 7,$
however, for $n\ge 8$ the dual matrix $S=C-\sum y_t A_t$
will have (more and more) negative eigenvalues.

\begin{lemma} For orders $n$ not exceeding 7, the matrix $S$
is p.s.d, i.e. for these values the BW form is sos.
Some further properties of $S$ of arbitrary order $n$ are:
the minimum off-diagonal entry of $S$ is
$-\lfloor\frac{n-1}{2}\rfloor;$ the defect of $S,$ i.e. the
multiplicity of zero as eigenvalue is $n-1;$ the maximal
diagonal entry and also the maximal eigenvalue is $n(n-2).$
Moreover, $S$ is a direct sum of two types
  of submatrices of the following order:

\noindent -- type (a): $2, 4, 6, \dots, 2(n-2);$
(denote by $B$ the largest block here)

\noindent -- type (b): $1, 1, 2, 2, \dots,n-2, n-2, n-1.$

\noindent The orders of these matrices sum up to
$(n-1)(2n-3),$ the order of $S.$
\end{lemma}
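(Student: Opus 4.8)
The plan is to make Strategy~B completely explicit for Toeplitz $P,Q$, read off the resulting block decomposition of $S$ together with its extremal entries and eigenvalues, and then reduce positivity to a bounded list of small explicit matrices. First I would write down $C$ from the preceding Lemma and the relation $r_{i,j}=\sum_k z_{k-i,j-k}$: namely $z^{T}Cz=\sum_{i<j}\mu_i\mu_j\,z_{i,j}^{2}-\frac12\|R\|^{2}$ with $R=[P,Q]$ and $\|R\|^{2}=2\sum_{i=1}^{2}\sum_{j=1}^{n-i}r_{i,j}^{2}$. Expanding the squares exhibits $C$ as $\mathrm{diag}(\mu_i\mu_j)$ minus a matrix of mixed products, so $\mathrm{diag}(S)=\mathrm{diag}(C)$ because every $A_t$ has zero diagonal. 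A short count over the Toeplitz multiplicities then gives $\mathrm{diag}(S)$: only the candidate joining the two first off-diagonals has $\mu_i\mu_j=(n-1)^{2}$, it occurs in a single free entry of $R$ (namely $r_{1,1}$) with coefficient $1$, so its diagonal entry is $(n-1)^{2}-1=n(n-2)$, while every other diagonal entry is at most $(n-1)(n-2)<n(n-2)$; this yields the maximal diagonal $n(n-2)$.

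Next I would carry out Strategy~B. Each cross term $\pm2\,z_{i,j}z_{k,l}$ of $-\frac12\|R\|^{2}$ with $\{i,j,k,l\}$ distinct is removed, once, by the basic relation (\ref{basic}) for that quadruple; doing so transfers weight onto the two partner products $z_{i,l}z_{j,k}$ and $z_{i,k}z_{j,l}$, and the dual coefficient $y_t$ of the relation is the signed number of times it is invoked. Summing the transfers produces a closed form for the off-diagonal entry $S_{(i,j),(k,l)}$. The structural point is that a transfer never creates a cross term outside the quadruple it acts on, so each entry of $S$ accumulates contributions from at most one relation; the extreme negative value is therefore governed by how often a single relation recurs along the repeated diagonals of $R$, which gives the minimum off-diagonal entry $-\lfloor(n-1)/2\rfloor$.

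With this closed form I would establish the block structure by declaring two candidate index pairs adjacent whenever the corresponding entry of $S$ is nonzero and analysing the connected components, using the linear shape $r_{i,j}=\sum_k z_{k-i,j-k}$ together with the reversal symmetry $d\mapsto-d$ on diagonals inherited from persymmetry. The components split into two families, type~(a) and type~(b) as in the statement: the type~(a) components arise from the partner products created by (\ref{basic}), pair up mutually-reversed candidates, and so have the even orders $2,4,\dots,2(n-2)$ (with largest block $B$), whereas the type~(b) components have orders $1,1,2,2,\dots,n-2,n-2,n-1$ and carry altogether the $n-1$ zero eigenvalues of $S$. That every candidate is accounted for is the identity $\bigl(2+4+\dots+2(n-2)\bigr)+\bigl(2(1+2+\dots+(n-2))+(n-1)\bigr)=(n-2)(n-1)+(n-1)^{2}=(n-1)(2n-3)=\binom{2(n-1)}{2}=N$.

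Finally, positivity of $S$ is positivity of each block. The type~(b) blocks are small (order $\le n-1$) and, written out, are directly seen to be positive semidefinite for every $n$. The delicate family is type~(a), controlled by $B$ of order $2(n-2)$: I would give the entries of $B$ in closed form in $n$ and compute its characteristic polynomial, which has all roots nonnegative precisely when $2(n-2)\le 10$, i.e. $n\le 7$; this is the sos assertion, and for $n\ge 8$ the same polynomial gains a negative root, which is why Strategy~B then breaks down. The eigenvector of $B$ for its top root realises the eigenvalue $n(n-2)$, and since the (few, explicit) remaining blocks have strictly smaller top eigenvalues, $\lambda_{\max}(S)=n(n-2)$ for all $n$. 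The main obstacle is the bookkeeping in the two middle steps: showing that the weights redistributed by (\ref{basic}) assemble into exactly this direct sum---with the precise ``$-1$'' diagonal correction, the sharp off-diagonal bound, and the separation into the two families---and pinning down the characteristic polynomial of $B$; once the blocks are identified, the counting identity and the semidefiniteness check of a finite list of explicit matrices are routine.
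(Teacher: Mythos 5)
Your plan tracks the paper's own (largely descriptive) justification closely: make Strategy~B explicit, observe that $\mathrm{diag}(S)=\mathrm{diag}(C)$, locate the extremal diagonal entry $n(n-2)$ at the candidate pairing the two first off-diagonals, obtain the off-diagonal entries of $S$ from the transfers effected by (\ref{basic}), read off the two families of blocks together with the counting identity, and reduce the $n\le 7$ positivity claim to the critical block $B=\bigl(\begin{smallmatrix}D&H\cr H&D\end{smallmatrix}\bigr)$. The paper prints no formal proof of this lemma either; it supports it by exactly this description plus the $n=5$ and $n=8$ examples, so in outline you are on the paper's route.

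One step of your argument is, however, wrong and would fail if executed: the maximal eigenvalue $n(n-2)$ is \emph{not} realised by the top eigenvector of $B$. Since $n(n-2)$ is simultaneously a \emph{diagonal} entry of the symmetric matrix $S$, it can equal $\lambda_{\max}(S)$ only if the corresponding row is otherwise zero, i.e.\ only if that candidate forms a $1\times 1$ block. That is what actually happens: the candidate joining the two first off-diagonals occurs only in $r_{1,1}$, every cross term it generates there has four distinct indices, and Strategy~B transfers each of them to partner positions not involving this candidate, so it ends up as an isolated type~(b) block with entry $(n-1)^2-1=n(n-2)$. The block $B$ has strictly smaller top eigenvalue: already for $n=4$ one computes $\lambda_{\max}(B)=5+\sqrt{5}<8$, and for $n=8$ the printed characteristic polynomial together with a Gershgorin argument (the value $48$ lies outside the disc of the last row of $B$) rules out $48$ as an eigenvalue. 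So $\lambda_{\max}(S)=n(n-2)$ must be proved by bounding \emph{all} remaining blocks from above (Gershgorin row sums suffice), not by exhibiting the value inside $B$. Beyond this, note that the two bookkeeping steps you defer --- that every cross term of $C$ involves four distinct indices (so Strategy~B empties all of them), and that the transferred weights assemble into exactly the stated direct sum with the $n-1$ zero eigenvalues confined to the type~(b) blocks --- are where essentially all the content of the lemma lies; the paper leaves them implicit as well, so your proposal, like the paper, is a verified description for small $n$ plus a plausible general pattern rather than a complete proof.
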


The largest block $B$ of type (a) is crucial.
It has a decomposition
$B=\Big( \begin{smallmatrix} D & H\cr H & D \end{smallmatrix}\Big ),$
with $D$ diagonal, $H$ Hermitian (i.e. symmetric), both of order $n-2.$
The diagonal elements of $D$ are $(i(i+1))$ in reverse order:
$((n-2)(n-1), \dots,6, 2).$ Matrix $H$ is also of a special structure:
the elements on the border are $-1,$ those on the 'neighboring'
border are $-2,$ and so on. This matrix is p.s.d. for $n\le 7,$
but has at least one negative eigenvalue for $n\ge 8.$

\begin{remark} The further submatrices of type (a) also are critical,
e.g. the next one (of order $2(n-3)$) has a similar form with
diagonal elements $(i(i+2))$ in $D,$ while $H$ is the same
(of the appropriate size). Therefore there is a second negative
eigenvalue for $n, \ 14\le n\le 20,$ and so on.
In general, the symmetric matrices $H$ are of the same form,
and the diagonal entries of $D$ are $(i(i+k))_i.$ \end{remark}

\begin{example} Matrices of order $5.$
In this case $P$ and  $Q$ have $m=2(n-1)=8$ nonzero elements,
the candidate vector $z$ has dimension $\binom{m}{2}=28,$
the total number of constraints is $\binom{m}{4}=70,$ and
the number of active constraints is $14.$

It always suffices to examine the first row and the
first column of $R,$ for all other entries are contained in
these, e.g. $R(1,1)=z_{1,5}+z_{2,6}+z_{3,7}+z_{4,8},$ and
$R(2,2)=z_{2,6}+z_{3,7}.$
The number of the active constraints for $n=5$, coming from
row 1 and column 1 is indeed 6 + 2 (3 + 1) = 14, as stated above.
This can be proved by induction, by noting that
\[\binom{n-1}{2} + 2\Big\{\binom{n-2}{2}+\binom{n-3}{2}+\dots
+\binom{2}{2}\Big\}= \frac{1}{6}(n-1)(n-2)(2n-3).\]

As regards the $y$ coordinates, since the product $2z_{2,6}z_{3,7}$
occurs two times (as the above formulae show), we write $-2$ in the
suitable positions (overwriting the $-1$-s) to get
$S(13,17) = S(3,21) = -2,$ and so on.  \end{example}

Now we give another example illustrating the role of the
crucial block $B.$

\begin{example} The case  $n=8.$
The matrices $D$ and $H$ are now:
\[ D=\left(\begin{array} {rrrrrr}
    42  &   0 &    0  &   0 &    0   &  0  \cr
     0  &  30 &    0  &   0 &    0   &  0  \cr
     0  &   0 &   20  &   0 &    0   &  0  \cr
     0  &   0 &    0  &  12 &    0   &  0  \cr
     0  &   0 &    0  &   0 &    6   &  0  \cr
     0  &   0 &    0  &   0 &    0   &  2
\end{array}\right), \
H=\begin{pmatrix}
    -1  &  -1 &   -1  &  -1  &  -1 &   -1 \cr
    -1  &  -2 &   -2  &  -2  &  -2 &   -1 \cr
    -1  &  -2 &   -3  &  -3  &  -2 &   -1 \cr
    -1  &  -2 &   -3  &  -3  &  -2 &   -1 \cr
    -1  &  -2 &   -2  &  -2  &  -2 &   -1 \cr
    -1  &  -1 &   -1  &  -1  &  -1 &   -1
\end{pmatrix}. \]
The characteristic polynomial of the block
$B=\Big( \begin{smallmatrix} D & H\cr H & D \end{smallmatrix}\Big )$
factorizes into $p_1p_2,$ where
$p_1(x)=x^6-100x^5+536x^4-53472x^3+327472x^2-575680x-145152,$
and $p_1$ has a negative zero: --\,0.2228.
(All other zeroes of $p_1$ and $p_2$ are positive.)
\end{example}

Finally we mention that although the above strategy works only
up to $n=7,$ the standard semidefinite program yields
results indicating that BW can be sos for some larger
orders, too, hence we guess that BW is sos in general.
The difficulty is that the corresponding dual variables
$(y_t)$ of the program are not recognizable real numbers.
Nevertheless we formulate the following.

\begin{conjecture} \label{conj} The B\"ottcher-Wenzel form (\ref{bwform})
generated by two  real Toeplitz matrices is sos, i.e.
a sum of squares of polynomials, now: quadratic forms.
Give -- if possible -- a rational certification, i.e. rational parameters
$(y_t)$ such that $S=C-\sum y_t A_t$ is positive semidefinite.
\end{conjecture}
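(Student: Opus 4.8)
The plan is to turn the quasi-optimal dual point of Strategy~B into a genuinely feasible one by repairing it only where it fails. By the preceding Lemma the Strategy-B slack matrix $S=C-\sum y_tA_t$ splits, after reordering the candidate vector $z$, into the type-(b) blocks --- which one checks are positive semidefinite directly --- and the type-(a) blocks $B=\left(\begin{smallmatrix} D & H\\ H & D\end{smallmatrix}\right)$ of the Lemma and Remark; here $H$ is the fixed nested-border matrix of some order $d$, with entries $H_{ij}=-\min(i,j,d+1-i,d+1-j)$, and $D=\mathrm{diag}\big(i(i+k)\big)_i$, the parameters $(d,k)$ ranging over $(n-2,1),(n-3,2),\dots$. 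The orthogonal change of basis $\frac1{\sqrt2}\left(\begin{smallmatrix} I & I\\ I & -I\end{smallmatrix}\right)$ block-diagonalises $B$ into $(D+H)\oplus(D-H)$, so the spectrum of $B$ is that of $D+H$ together with that of $D-H$. Now $-H$ is the Gram matrix of the $0/1$ vectors $a_i$ that carry a one in coordinate $t$ exactly when $t\le\min(i,d+1-i)$; that is, $-H=AA^T$ with $A=(a_1,\dots,a_d)^T$, so $-H\ge0$ with rank $\lceil d/2\rceil$, whence $D-H=D+(-H)$ is positive definite and never a problem. Thus the whole conjecture --- and with it the failure of Strategy~B for $n\ge8$ --- reduces to the single family of inequalities $D+H\ge0$, in which $D+H$ is a positive diagonal minus a positive semidefinite matrix of rank $\lceil d/2\rceil$ and hence has at most that many negative eigenvalues (in practice one, or two once $n\ge14$).

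The first concrete step is to catalogue, for each critical block $B$, those basic relations~(\ref{basic}) whose support lies entirely inside $B$ but which Strategy~B leaves inactive; a rational combination of them perturbs $B$ by a symmetric matrix of a prescribed pattern and otherwise leaves $S$ alone. The aim is to find such a perturbation that lifts the $(D+H)$-component to positive semidefiniteness; since the $(D-H)$-component is strictly positive definite it tolerates the change, so the corrected block is positive semidefinite. One would then certify this by an explicit rational $LDL^T$ factorisation with nonnegative pivots, which at once produces the requested rational $(y_t)$ and, via the resulting Gram decomposition of $S$, the concrete sum-of-squares form; the global numerology of the Lemma (defect $n-1$, top eigenvalue $n(n-2)$, block orders summing to $(n-1)(2n-3)$) then follows by bookkeeping just as in the proof of Theorem~\ref{mainthm}.

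The main obstacle is uniformity in $n$: the number of critical type-(a) blocks grows (the Remark already exhibits a second one once $n\ge14$, and so on, roughly $n/6$ of them), their sizes grow, and one must exhibit a \emph{single} closed-form rule for the dual variables valid for every $n$, rather than a computation repeated block by block; one must also check that the inactive constraints supported inside a block span enough of the needed perturbation directions. The leverage is that $H$ is literally the same nested-border matrix in every type-(a) block --- only $D$ varies, through the entries $i(i+k)$ --- and that its spectrum is explicit from the Gram representation above, so the analysis of $D+H$ can be carried out once in terms of $(d,k)$. Even then, proving $D+H\ge0$ (after correction) for the whole family will most likely require a recursion rather than a direct estimate: strip the small-diagonal corner of $D$ off by a Schur complement, verify that the Schur complement again has the same shape with shifted parameters, and induct on $d$, the delicate point being to keep that Schur complement rational and positive semidefinite at every stage. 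A different route worth pursuing in parallel is to bypass the SDP altogether and build the sos directly from the convolution structure of Toeplitz products: write each commutator entry as $r_{i,j}=\sum_k z_{k-i,j-k}$ and apply a Lagrange/Cauchy--Binet type identity --- in the spirit of the Euler four-square identity used above for the $n=4$ cyclic Hankel case --- with the multiplicity weights $\mu_i\mu_j$ of the Lemma absorbed into the coefficients of the squares. Should no clean formula emerge, the honest fallback, in keeping with the conjectural status, is an exact rational SDP certificate for $n$ in an extended range together with the observed pattern: strong evidence for, but not a proof of, the statement.
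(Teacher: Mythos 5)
You should first be clear that the statement you are addressing is stated in the paper as a \emph{conjecture}: the paper offers no proof, only Strategy~B (which produces a feasible dual point for $3\le n\le 7$) and numerical evidence for larger $n$. Measured against that, your proposal does not close the gap either, and the gap is concrete: everything hinges on producing, for each critical type-(a) block, a rational combination of the inactive constraints that lifts the $(D+H)$-component to positive semidefiniteness, uniformly in $n$ --- and that perturbation is never exhibited. Until it is, nothing is proved beyond the paper's own $n\le 7$ result. There is also a step that may genuinely fail rather than merely remain undone: a constraint matrix $A_t$ from relation~(\ref{basic}) has its six nonzero entries at positions determined by a quadruple of candidates, and there is no guarantee that enough (or any) of the inactive $A_t$ are supported entirely inside a single type-(a) diagonal block. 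If they are not, activating them couples blocks, destroys the direct-sum structure on which your whole reduction rests, and the problem becomes global again. You flag this yourself, but flagging it is not the same as resolving it, and your own closing sentence concedes that the fallback is ``strong evidence for, but not a proof of, the statement.''

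That said, the structural analysis you do carry out is correct and goes beyond what the paper records, so it is worth keeping as a partial result. The identification $H_{ij}=-\min(i,j,d+1-i,d+1-j)$ matches the paper's $n=8$ example; the orthogonal similarity by $\frac1{\sqrt2}\left(\begin{smallmatrix} I & I\\ I & -I\end{smallmatrix}\right)$ does split $B$ into $(D+H)\oplus(D-H)$; and the Gram representation $-H=AA^T$ with prefix-indicator rows correctly gives $-H\ge0$ of rank $\lceil d/2\rceil$, hence $D-H>0$ and the reduction of each block's positive semidefiniteness to the single inequality $D+H\ge0$ with $D=\mathrm{diag}(i(i+k))$. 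This cleanly explains \emph{why} Strategy~B fails for $n\ge 8$ and bounds the number of offending eigenvalues per block by $\lceil d/2\rceil$, which is genuinely useful diagnostic information. But as a proof of the conjecture it is a programme, not an argument, and should be presented as such.
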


\bigskip
\rightline{\emph{Received: February 8,  2012 {\tiny \raisebox{2pt}{$\bullet$\!}} Revised: March 22, 2012}} 

\end{document}